\documentclass[10pt]{article}
\usepackage[english]{babel} 
\usepackage{amsmath,amsfonts,amsthm,amssymb} 
\usepackage{amssymb}
\usepackage{array}
\usepackage{fullpage}
\usepackage{enumerate}
\usepackage{enumitem}
\usepackage{xstring}
\usepackage{ifthen}
\usepackage{tikz}
\usepackage{tikz-cd} 
\usetikzlibrary{calc}
\usetikzlibrary{positioning}
\usepackage[nodisplayskipstretch]{setspace} 
\usepackage[authoryear]{natbib} 
\usepackage[unicode]{hyperref} 
\usepackage{xcolor} 
\usepackage{bm} 
\usepackage{mathrsfs} 

\usepackage{geometry}
\geometry{margin=1.5in}

\definecolor{linkcolor}{RGB}{109,71,106}

\definecolor{lam1}{HTML}{127DFF}
\definecolor{lam2}{HTML}{95B204}
\definecolor{lam3}{HTML}{D04133}
\definecolor{lam4}{HTML}{1F61B2}

\definecolor{plot1}{RGB}{52, 101, 28}
\definecolor{plot2}{HTML}{4BAD95}

\hypersetup{
colorlinks=true,
linkcolor=lam3,
urlcolor=plot2,
citecolor=lam2        
}


\newtheorem{thm}{Theorem}

\newtheorem{cor}[thm]{Corollary}

\theoremstyle{def}
\newtheorem{definition}{Definition}

\theoremstyle{remark}
\newtheorem{ex}{Example}
\newtheorem{rmk}{Remark}

\makeatletter
\newcommand{\mylabel}[2]{#2\def\@currentlabel{#2}\label{#1}}
\makeatother

\usepackage{titlesec}

\titleformat*{\section}{\bfseries\centering}
\titleformat*{\subsection}{\bfseries\centering}
\titleformat*{\subsubsection}{\itshape}
\titleformat*{\paragraph}{\large\bfseries\centering}
\titleformat*{\subparagraph}{\large\bfseries\centering}

\titlespacing\section{0pt}{10pt plus 4pt minus 2pt}{4pt plus 0pt minus 2pt}
\titlespacing\subsection{0pt}{8pt plus 4pt minus 2pt}{2pt plus 0pt minus 2pt}
\titlespacing\subsubsection{0pt}{6pt plus 4pt minus 2pt}{2pt plus 0pt minus 2pt}


\setstretch{1.25}



\renewenvironment{abstract}
 {\small
  \begin{center}
  \bfseries \abstractname\vspace{-.5em}\vspace{0pt}
  \end{center}
  \list{}{%
    \setlength{\leftmargin}{14mm}
    \setlength{\rightmargin}{\leftmargin}%
  }%
  \item\relax}
 {\endlist}
 
 \renewcommand{\blacksquare}{\vrule height7pt width4pt depth1pt}
 \newcommand{\blacksquarethin}{\vrule height7pt width2pt depth1pt}

\def \D{\Delta}
\def \R{\mathbb{R}}
\def \N{\mathbb{N}}
\def \W{\Omega}
\def \w{\omega}

\def \P{\mathbb{P}}

\def \F{\mathcal{F}}
\def \l{\pi}
\def \S{\Sigma}
\def \B{\mathcal{B}}

\def \L{\mathcal{L}}

\renewcommand{\int}{\textup{int}}

\def \s{\operatorname{\succcurlyeq}}
\def \>{\rangle}
\def\<{\langle}
\def \p{\rho}
\def \t{\tau}

\renewcommand{\phi}{\varphi}
\newcommand*\wc{{}\cdot{}}

\begin{document}

\onehalfspacing

\title{\vspace{-6ex}\textsc{Unforeseen Evidence 
}}

\author{Evan Piermont\thanks{Royal Holloway, University of London, Department of Economics, \texttt{evan.piermont@rhul.ac.uk}}}

\maketitle

\begin{abstract}
I propose a normative updating rule, \emph{extended Bayesianism}, for the incorporation of probabilistic information arising from the process of becoming more aware. Extended Bayesianism generalizes standard Bayesian updating to allow the posterior to reside on richer probability space than the prior. I then provide a behavioral characterization of this rule to conclude that a decision maker's subjective expected utility beliefs are consistent with extended Bayesianism.

\bigskip
\noindent \emph{Key words: extended Bayesianism; reverse Bayesianism; conditional expectations.}
\end{abstract}

\setlength{\abovedisplayskip}{3pt}
\setlength{\belowdisplayskip}{3pt}

\newpage

\section{Conditioning on Unforeseen Evidence} Decision makers (DMs) who are \emph{unaware} cannot conceive of, nor articulate, the decision relevant contingencies they are unaware of. Nonetheless, such agents may hold sophisticated probabilistic beliefs regarding those contingencies they \emph{are} aware of. How then should a DM's probabilistic beliefs respond to the discovery of novel contingencies? 
This paper introduces a generalization of Bayesianism called \emph{extended Bayesianism} that restricts how a DM should construct a posterior when she both becomes more aware (perceives more contingencies) and receives evidence (learns an event in the state space has obtained). In particular, the evidence may be \emph{unforeseen}, that is, involve contingencies of which she was perviously unaware.

\begin{ex}
After seeing a doctor for some diagnostic tests, a patient understands there are three possible diseases that could explain his symptoms: diagnoses $A$, $B$, and $C$. The patient looks up the relative frequencies of these diseases and thus maintains a well defined probability distribution over the outcomes of his tests. The doctor then calls the patient and informs him that new research has determined that disease $C$ actually comes in two variants $C1$ and $C2$. He further explains that the $C2$ variant produces unique diagnostic markers not found in the patient and can therefore be ruled out.

The doctor clearly provided information about the patient's health, and this information may intuitively alter the patient's beliefs about possible outcomes. However, the information was \emph{unforeseen}---the event the patient learns,  $\{A,B,C1\}$, was not one he originally considered and therefore not one he had perviously ascribed probability to---precluding the usual Bayesian machinery for updating beliefs.
\hfill $\blacksquarethin$
\end{ex}

In the example, standard Bayesianism does not apply; nonetheless, the logic that underlies Bayesian updating remains and allows us to place reasonable restrictions on the patient's updated belief. Notice that the excluded event ($C2$) was contained entirely in the event $C$. So, the \emph{relative} probabilities between $A$ and $B$ should remain unchanged as the evidence does not favor one event over the other. The \emph{absolute} probabilities of $A$ and $B$, on the other hand, should (weakly) increase since the their complement has now become (weakly) less likely. Notice that if $C$ had been ruled out (rather than $C2$) then these two conditions would characterize the usual method of Bayesian updating (redistributing the probability from $C$ over $A$ and $B$ according to their ex-ante likelihoods). This paper examines the model imposed by these conditions, which characterize Bayesianism in the usual setting, in the more general environment where the DM might receive unforeseen evidence. 

With a bit more formality, assume that the uncertainty relevant to a decision problem can be represented by a state space $\W$. If the DM is unaware of, or otherwise cannot perceive, some decision relevant contingencies, then her perception of the state space will be coarse. Let $\S_0$ and $\S_1$ collect the algebra of events she can perceive at time $0$ and 1, respectively. The notion of becoming more aware is captured by $\S_1$ containing more events than $\S_0$. 

The main definition of this paper, \emph{extended Bayesianism} (EB) relates a prior $\pi_0$, defined over $\S_0$, with a posterior $\pi_1$, defined over $\S_1$, when the DM learns some event $S_1 \in \S_1$. Extended Bayesianism assumes that the prior and posterior are related through the existence of a interim probability $\bar\pi$ defined over $\S_1$ such that (i) $\bar\pi$ is an extension of $\pi_0$ and (ii) $\pi_1$ is the conditional of $\bar\pi$ with respect to the event $S_1$. 

\addtocounter{ex}{-1}
\begin{ex}[continued]
Let $\W = \{\w_A,\w_B,\w_{C1}, \w_{C2} \}$, $\Sigma_0$ be generated by the partition $\{\{\w_A\},\{\w_B\}, \{\w_{C1},\w_{C2}\}\}$ and $\Sigma_1$ by the discrete partition. Let $\pi_0$ be given by $\pi_0(\{\w_A\}) = \frac{1}{2}$ and $\pi_0(\{\w_B\}) = \pi_0(\{\w_{C1},\w_{C2}\} = \frac{1}{4}$. Finally, let $\pi_1(\{\w_1\}) = \frac{4}{7}$, $\pi_1(\{\w_B\}) = \frac{2}{7}$, $\pi_1(\{\w_{C1}\}) = \frac{1}{7}$, and $\pi_1(\{\w_{C2}\}) = 0$. Then $(\pi_0,\pi_1)$ satisfies EB as indicated by the interim probably $\bar\pi$ on $(\W,\Sigma_1)$ given by $\bar\pi(\{\w_1\}) = \frac{1}{2}$, $\bar\pi(\{\w_B\}) = \frac{1}{4}$, $\bar\pi(\{\w_{C1}\}) = \frac{1}{8}$, and $\bar\pi(\{\w_{C2}\}) = \frac18$.
Notice that the relative probabilities of $\w_A$ and $\w_B$ remain constant whereas their absolute probabilities increased. 
\hfill $\blacksquarethin$
\end{ex}

Extended Bayesianism is normatively appealing for several reasons: First, it is a very tight generalization of Bayesianism; the only model consistent with EB when the DM does \emph{not} become more aware (i.e., when $\S_0 = \S_1$) is Bayesianism itself. Second, it has a purely rational interpretation as a two stage updating process whereby first the DM extends her probabilistic assessments to cover newly discovered contingencies and then she conditions her newly constructed beliefs in the standard manner. Of course, she may do these steps simultaneously, so that the interim belief, $\bar\pi$, will be unobservable---I provide an alternative characterization of EB that is observable, in the sense that it places conditions directly on $\pi_0$ and $\pi_1$ without mention of the (possibly hypothetical) interim beliefs.

Finally, EB is minimal, placing restriction on the probabilities of newly discovered events only insofar as what is dictated by Bayesianism itself. To investigate this final motivation in more detail, this paper lays forth a characterization of EB
from the vantage of a DM's betting behavior in the standard decision theoretic environment.

Consider two events $E, F \in \S_0$ such that the DM would prefer to bet $x$ on $F$  than $y$ on $E$ at time 0 (that is, she would prefer to receive $x$ conditional on $F$ (and nothing otherwise) rather than to receive $y$ conditional on $E$).  If the DM then learns that $S_1$ has obtained, and this reverses her preference---she now strictly prefers to bet $y$ on $E$---what can we conclude about the evidence, $S_1$, and its relation to the two events, $E$ and $F$? 

In the standard Bayesian setup, it \emph{must} be that some sub-event $G \subseteq F$ was considered possible---$\pi_0(G) > 0$---but contradicted the evidence---$G \cap S_1 = \emptyset$. Call such an event \emph{discarded} since it was considered before, but not after, the evidence was obtained. 
Then, call a (subjective expected utility) DM \emph{extension consistent} if preference reversals occur only if the event that becomes relatively less appealing contained a discarded sub-event.

\addtocounter{ex}{-1}
\begin{ex}[continued]
Due to the the confusing bureaucracy of the healthcare system in the patient's home country, he must purchase insurance for each disease separately. The patient can insure himself against the cost of treating disease $A$ at cost $x_A \in \R$ (and $B$ and $C$ at cost $y_B$ and $z_C$, respectively). Given his initial probabilistic assessment, at the current rates, the (expected value maximizing) patient is indifferent between insuring against $A$ or $B$ and strictly prefers to insure against $C$. We can write this as $z_C \succ_0 x_A \sim_0 y_B$.

Then, how might the patient's preferences change in the face of the doctors unforeseen exclusion of $C2$? Extension consistency requires that the patient remain indifferent between $x_A$ and $y_B$: $x_A \sim_1 y_B$. However, he may reverse his preference regarding $z_C$, since $C$ contains the discarded event $C2$. Indeed, as evinced by the prior part of this example, it is possible that $\pi_0(C) > \pi_1(C)$, reducing the value of insuring against $C$.

What if, in addition, the patient could jointly insure against $A$ and $B$ at rate $w_{AB}$ and initally strictly preferred this to insuring against only $C$: $w_{AB} \succ_0 
z_C$? Unlike the preference between $z_C$ and $x_A$, the patient's preference between $w_{AB}$ and $z_C$ cannot be reversed by the doctors information since the event $A$ or $B$ does not contain a discarded sub-event. 
\hfill $\blacksquare$
\end{ex}

To see that extension consistency is necessary for standard Bayesianism (when $\Sigma_0 = \Sigma_1$), consider the special case where preferences are reversed over equally sized bets: such a preference reversal indicates that 
$\pi_0(F) \geq \pi_0(E)$ and $\pi_1(E) > \pi_1(F). $
When $\pi_1$ is the conditional of $\pi_0$ with respect to the event $S_1$, the latter inequality is equivalent to  
\begin{equation}
\label{eq:intro2}
\frac{\pi_0(S_1 \cap E)}{\pi_0(S_1)} >  \frac{\pi_0(S_1 \cap F)}{\pi_0(S_1)}
\end{equation}
Multiplying \eqref{eq:intro2} by $\pi_0(S_1)$ and subtracting the resulting inequality from $\pi_0(F) \geq \pi_0(E)$, we see that
$$\pi_0(F \cap S_1^c) > \pi_0(E \cap S_1^c) \geq 0,$$ 
so $G = F \cap S_1^c$ is a discarded sub-event.

This turns out to also be sufficient: A DM is extension consistent if and only if her posterior is a Bayesian update of her prior.
Intuitively, this is because extensions consistency requires that betting preferences are invariant for any two events that are completely consistent with the evidence (subsets of $S_1$), preserving the ratios of probabilities of such events.  Bayesianism can thus be characterized by the stipulation that preference reversals occur only when an event is partially incompatible with the evidence. 

Unlike the definition of Bayesianism, which requires that $\pi_0(S_1)$ is defined, nothing in the definition of extension consistency requires that preferences at time 0 and at time 1 are defined over the same set of events. Extended Bayesianism is the model of beliefs characterized by extension consistency when time 1 preferences are defined over a richer space of uncertainty. 

From this perspective we see the minimality of EB. Whenever two events are \emph{both} completely consistent with the evidence, whether the evidence was foreseen or not, betting behavior across the two events should not change. Preference reversals are allowed but only when the event that becomes relatively less appealing is not completely consistent with the evidence---exactly as in the standard model. 

Because EB places the minimal restrictions on updating, the posterior will not in general be uniquely determined by the prior and the evidence. 
In particular, different extensions of $\pi_0$ may yield different posteriors. In certain cases, discussed below, this multiplicity can become extreme and the restrictions embodied by EB reduce to a (very weak) absolute continuity condition.

I argue that this flexibility is in general philosophically appealing and the cases where the restrictions vanish are illuminating. 
EB constrains the interpretation of novel evidence only insofar as it is related to those contingencies the DM already envisions.\footnote{%
For example, consider the case where the DM discovers that an event $E \in \S_0$ is actually composed of two events $E_1$ and $E_2$ (and suppose further the DM perceived no non-trivial sub-events of $E$ at time 0 and that $E \subseteq S_1$). Now, because $E = E_1 \cup E_2$, EB requires that $\pi_1(E_1) + \pi_1(E_2) = \pi_1(E) = \frac{\pi_0(E)}{\bar\pi(S_1)}$. EB is silent, however, on how the DM distributes the probability between $E_1$ and $E_2$.} Thus, EB, as a model of belief updating, allows for the addition of any other environment-specific criteria for how novel evidence should be interpreted but does not impose such additional restrictions itself. 

Call an event $E \in \S_1$ completely non-measurable if it does not contain any (non-trivial) elements of $\S_0$. If the DM learns a completely non-measurable event then EB imposes no restrictions save absolute continuity: that $\pi_0(F) = 0$ implies $\pi_1(F)=0$ for $F \in S_0$. What seems surprising initially---that Bayesianism has essentially nothing to say in the context of learning completely non-measurable events---makes sense by considering the how such events could be interpreted by the DM.\footnote{I thank an anonymous referee for pointing out this line of enquiry.} 

After learning $S_1$, the DM must contemplate the updated likelihood of events in $\Sigma_0$. For events contained in $S_1^c$, the updated probability must be 0, so her problem concerns the events $\{ F \in \S_0 \mid F \cap S_1 \neq \emptyset \}$. When $S_1$ is completely non-measurable then all such $F$ intersect both the evidence and its complement: we have $F \cap S_1 \neq \emptyset \neq F \cap S_1^c$. Since neither $F \cap S_1$ nor $F \cap S_1^c$ is in $\S_0$, the relative weight DM places on these two events (via $\bar\pi$) is unrestricted---hence the change in probability from $\pi_0(F)$ to $\pi_1(F)$ is likewise unrestricted. In other words, no event in $\{ F \in \S_0 \mid F \cap S_1 \neq \emptyset \}$ implies (is a subset of) or is implied by (is a superset of) $S_1$ leaving the DM's subjective correlation between $S_1$ and the events of $\S_0$ completely unrestricted and allowing for an arbitrary effect of conditioning on $S_1$.

In this sense, completely non-measurable events are unrelated to the contingencies that the DM already envisions. Learning a completely non-measurable event is a paradigm shift; it changes the way the DM perceives \emph{all} previously considered contingencies. Indeed, when $S_1$ is completely non-measurable, there is no non-trivial event $F \in S_0$ such that $F = F \cap S_1$. Such evidence, which is both unanticipated by and unrelated to the prior, can therefore radically change beliefs.

The rest of the paper is structured as follows. The next section introduces the main definition, extended Bayesianism. It also provides an alternative characterization through conditions directly on the beliefs. The characterization is both practical (in the sense that it is observable) and helps to motivate the definition. Next, Section \ref{sec:DT} considers a decision theoretic set up and puts forth a characterization of extended Bayesianism through the lens of betting behavior. Section \ref{sec:inf} generalizes the set up to infinite state spaces and Section \ref{sec:rep} to more than two time periods. Finally, some links to the larger literature are discussed in Section \ref{sec:lit}. An appendix, \ref{sec:DT2}, sketches a model where the decision theoretic objects are language based (rather than state space based), which assuages some worries about observability.  

\section{Bayesianism and Extended Bayesianism}

Let $\W$ denote a finite objective, albeit possibly unobservable, state space. Let $\Sigma_t$, a sigma-algebra on $\W$, represent the events the DM can conceive of at $t \in \{0,1\}$. By nature of the problem, we assume that $\Sigma_0 \subseteq \Sigma_1$. The DM's subjective uncertainty, given her current understanding, is taken to be a probability distribution, $\pi_t$, on the probability space $(\W,\Sigma_t)$. Set 
$S_t = \bigcap \{ E \in \Sigma_t \mid \pi_t(E) = 1 \}$
 to denote the support of $\pi_t$, the smallest event in $\Sigma_t$ with $\pi_t$-probability 1. 
Let $\pi_0^{*}: \S_1 \to [0,1]$ denote the outer measure over $\S_1$ induced by $\pi_0$: $\pi_0^{*}(E) = \inf\{\pi_0(F) \mid F \in \S_0, E \subseteq F\}$. Clearly, if $E \in \S_0$ we have $\pi_0^{*}(E) =\pi_0(E)$.

Before introducing the main definition, which captures updating when when the time 1 beliefs are defined over a richer space of events than time 0 beliefs, we first recall the standard definition of Bayesian updating. 

\begin{definition}
\label{def:b}
If $\S_0 = \S_1$, then say that $(\pi_0, \pi_1)$ satisfies \emph{Bayesianism} if 
\begin{enumerate}
\item[\mylabel{b1}{\textsc{b}}] If $\pi_0(S_1) > 0$, then $\pi_1(E) = \frac{\pi_0(E \cap S_1)}{\pi_0(S_1)}$
for all $E \in \Sigma_1$.
\end{enumerate}
\end{definition}

Often, we think of Bayesianism in relation to having learned that some particular event $E$ obtained, in which case the posterior belief is determined by conditioning the prior belief on $E$. Here, instead, we take a slightly more general notion: stating that the pair $(\pi_0, \pi_1)$ satisfies Bayesianism if \emph{there exists} some event such that posterior is the conditional of the prior with respect to this event (when this is true, the event must of course be $S_1$). 
As usual, Bayesianism places very weak requirements on $\pi_1$ when the conditioning event has $\pi_0$-measure 0. In such cases, the only (and trivial) requirement is that the support of the posterior be the conditioning event.

We now turn our attention to the case where the DM's perception of uncertainty in time 1 is finer than her perception at time 0, so that $\S_0$ is a strict subset of $\S_1$. Here, the restriction \eqref{b1} 
is not in general well defined, since $\pi_0$ does not assess all the relevant subsets (for example if $S_1 \notin\S_0$). The main definition, extended Bayesianism, deals with this by asserting the existence of an extension of $\pi_0$ to the richer space of uncertainty, allowing us to define the analog of the restriction \eqref{b1}.

\begin{definition}
Say that $(\pi_0, \pi_1)$ satisfies \emph{extended Bayesianism} (EB), if there exists a probability distribution $\bar\pi$ on $(\W, \Sigma_1)$ such that
\begin{enumerate}
\item[\mylabel{eb2}{\textsc{eb1}}] $\bar\pi(E) = \pi_0(E)$ for all $E \in \Sigma_0$, and 
\item[\mylabel{eb3}{\textsc{eb2}}] If $\pi_0^{*}(S_1) > 0$, then $\bar\pi(S_1) > 0$ and $\pi_1(E) = \frac{\bar\pi(E \cap S_1)}{\bar\pi(S_1)}$
for all $E \in \Sigma_1$.
\end{enumerate}
Moreover, in such cases, call $\bar\pi$ a \emph{witness} to the fact that $(\pi_0, \pi_1)$ satisfy EB (or that $(\pi_0, \pi_1)$ satisfies EB is \emph{witnessed by} $\bar\pi$).
\end{definition}

An interpretation is as follows: If $(\pi_0,\pi_1)$ satisfies EB it is \emph{as if} $\pi_1$ was constructed by conditioning $\pi_0$ on the event $S_1$. I say `as if' because when  $S_1\notin \Sigma_0$ then the $\pi_0$ probability of $S_1$ is undefined.
However, in this case, we make sense of conditioning by first extending $\pi_0$ to the richer algebra ($\pi_0 \to \bar\pi$) and then constructing $\pi_1$ by conditioning this extension ($\bar\pi \to \pi_1$).

Like Bayesianism, EB places no requirements on the posterior when the conditioning event is considered impossible according to the time 0 beliefs. The notion of impossibility, however, must now permit the conditioning event $S_1$ to be unforeseen, and therefore have no $\pi_0$ probability. An event $E \in \S_1$ is impossible according to $\pi_0$ if there is no way to extend $\pi_0$ to $\S_1$ and assign $E$ positive probability: this is precisely when $\pi_0^{*}(E) = 0$. Thus \eqref{eb3} states that the posterior is the (standard) Bayesian update of some extension of $\pi_0$, and this extension assigns the conditioning event postive probability unless it is impossible to do so. 

EB is a generalization of Bayesianism in the tightest possible way. While EB applies on a more general domain, the only theory consistent with EB when $\S_0 = \S_1$ is Bayesianism itself. 
\begin{rmk}
\label{rmk:reduct}
If $\S_0 = \S_1$, then $(\pi_0, \pi_1)$ satisfies Bayesianism if and only if it satisfies extended Bayesianism. 
\end{rmk}
Thus, EB generalizes Bayesianism to novel circumstances but does not permit behavior which is ever at odds with the classical model.

%

\begin{ex}
Let $\Sigma_0 = \{\emptyset, \W\}$. Then $(\pi_0,\pi_1)$ satisfies EB irrespective of $\pi_1$.
\hfill $\blacksquare$
\end{ex}

\begin{ex}
\label{ex:anythinggoes0}
Let $\pi_0^{*}(S_1) = 0$, then $(\pi_0,\pi_1)$ satisfies EB.
\end{ex}

\begin{ex}
Let $S_1 = S_0$. Then $(\pi_0,\pi_1)$ satisfies EB if and only if $\pi_1$ is an extension of $\pi_0$ to $\S_1$.
\hfill $\blacksquare$
\end{ex}


\subsection{Observability} 
Bayesian updating is the normative benchmark for how probabilistic judgements should respond to the acquisition of new evidence. Unfortunately, in cases where $S_1\notin \Sigma_0$, Bayes' rule cannot be directly verified, as there was no prior belief regarding the likelihood of the conditioning event. The notion of commensurability, below, provides a simple resolution, advancing an observable restriction on $(\pi_0,\pi_1)$ equivalent to extended Bayesianism.


\begin{definition}
Say that $\pi_1$ is \emph{commensurate} to $\pi_0$ if for all $E,F \in \Sigma_0$ with $E \subseteq S_1$
\begin{enumerate}
\item[\mylabel{c1}{\textsc{c1}}] $\pi_0(F) = 0 \implies \pi_1(F) = 0$, and,
\item[\mylabel{c2}{\textsc{c2}}]  $\pi_0(E)\pi_1(F) \leq \pi_1(E)\pi_0(F)$.
\end{enumerate}
\end{definition}

\begin{rmk}
\label{rmk:size}
For all $E \in \Sigma_0$ with $E \subseteq S_1$, $\pi_0(E) \leq \pi_1(E)$. This follows by setting $F$ to $\W$ in \eqref{c2}. If both $E,F \subseteq S_1$, then \eqref{c2} holds with equality---this follows from interchanging the roles of $E$ and $F$. 
\end{rmk}


\begin{thm}
\label{prop:update}
$\pi_1$ is commensurate to $\pi_0$ if and only if $(\pi_0,\pi_1)$ satisfies EB and $\pi_0^{*}(S_1) > 0$.
\end{thm}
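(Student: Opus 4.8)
The plan is to prove the two directions separately; the forward implication (commensurability $\Rightarrow$ EB) carries all the work, while the converse is a direct substitution.

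For the converse, suppose $(\pi_0,\pi_1)$ satisfies EB with witness $\bar\pi$ and $\pi_0^{*}(S_1)>0$, so that \eqref{eb3} yields $\bar\pi(S_1)>0$ and $\pi_1(E)=\bar\pi(E\cap S_1)/\bar\pi(S_1)$. Then \eqref{c1} is immediate: if $\pi_0(F)=0$ then $\bar\pi(F)=0$ by \eqref{eb2}, hence $\bar\pi(F\cap S_1)=0$ and $\pi_1(F)=0$. For \eqref{c2}, when $E\subseteq S_1$ we have $E\cap S_1=E$ and $\bar\pi(E)=\pi_0(E)$, so $\pi_1(E)=\pi_0(E)/\bar\pi(S_1)$, while $\pi_1(F)\le\pi_0(F)/\bar\pi(S_1)$ by monotonicity of $\bar\pi$; multiplying these two gives \eqref{c2}. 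For the forward direction I would first record that commensurability forces $\pi_0^{*}(S_1)>0$. Since $\Sigma_0$ is finite, the intersection $F^{*}=\bigcap\{F\in\Sigma_0\mid S_1\subseteq F\}$ lies in $\Sigma_0$ and realizes the outer measure, $\pi_0^{*}(S_1)=\pi_0(F^{*})$. Were this zero, \eqref{c1} would give $\pi_1(F^{*})=0$, contradicting $S_1\subseteq F^{*}$ and $\pi_1(S_1)=1$.

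The substance is constructing a witness $\bar\pi$. I would work atomwise: every atom of $\Sigma_1$ lies in $S_1$ or in $S_1^{c}$, and I classify each $\Sigma_0$-atom $B$ according to whether $B\subseteq S_1$, $B\subseteq S_1^{c}$, or $B$ straddles both. Setting $c=\min(\{1\}\cup\{\pi_0(B)/\pi_1(B): B\text{ a }\Sigma_0\text{-atom with }\pi_1(B)>0\})$, I define $\bar\pi(A)=c\,\pi_1(A)$ for every atom $A\subseteq S_1$ and place the residual mass $\pi_0(B)-c\,\pi_1(B)$ somewhere on the set $B\cap S_1^{c}$ for each $\Sigma_0$-atom $B$ not contained in $S_1$. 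By design $\bar\pi(B)=\pi_0(B)$ on every $\Sigma_0$-atom, so $\bar\pi$ restricts to $\pi_0$ on $\Sigma_0$ (and in particular has total mass one); and conditioning on $S_1$ recovers $\pi_1$, since $\bar\pi(A)/c=\pi_1(A)$ for $A\subseteq S_1$ and $\pi_1$ is supported on $S_1$.

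The crux---and the only place commensurability is genuinely needed---is that this construction be non-negative and internally consistent. For a straddling atom the residual is non-negative exactly when $c\le\pi_0(B)/\pi_1(B)$, which holds by the definition of $c$. For an atom $B\subseteq S_1$ there is no mass to place in $S_1^{c}$, so consistency instead demands that the residual \emph{vanish}, i.e. $\pi_0(B)=c\,\pi_1(B)$; this is the step I expect to be most delicate. It requires the ratio $\pi_0(B)/\pi_1(B)$ to be constant across all $\Sigma_0$-atoms inside $S_1$ and to equal the minimizing $c$. Constancy is precisely the equality case of \eqref{c2} recorded in Remark \ref{rmk:size}, and that the common value is the minimum follows by applying \eqref{c2} with $E\subseteq S_1$ and $F$ a straddling atom, which forces the inside-$S_1$ ratio below every straddling ratio. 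The degenerate subcases are handled by \eqref{c1} and \eqref{c2}: if $B\subseteq S_1$ has $\pi_1(B)=0$ then \eqref{c2} with $F=\W$ gives $\pi_0(B)=0$, so the required equality reads $0=0$; and $c>0$ because, by \eqref{c1}, every atom contributing to the minimum has $\pi_0(B)>0$.
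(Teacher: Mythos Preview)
Your proposal is correct and follows essentially the same approach as the paper: the paper (via Theorem~\ref{prop:updateinf}) also constructs the witness by rescaling $\pi_1$ on $S_1$ by a constant $\beta$ (your $c$) and depositing the residual $\pi_0(F)-\beta\pi_1(F)$ on $S_1^c$, with $\beta$ chosen as the common ratio $\pi_0(E)/\pi_1(E)$ for $E\in\Sigma_0$, $E\subseteq S_1$, when such $E$ exists. Your atomwise phrasing and the single-formula choice $c=\min(\{1\}\cup\{\pi_0(B)/\pi_1(B)\})$ are just the finite-state specialization of that construction; your explicit derivation of $\pi_0^{*}(S_1)>0$ from \eqref{c1} is a nice touch that the paper leaves implicit.
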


Notice that by Remark \ref{rmk:reduct}, commensurability serves as a characterization of standard Bayesianism whenever $\S_0 = \S_1$. While we wait to prove Theorem \ref{prop:update} until the next section, its introduction here provides a vantage by which to interpret the restrictions that underlie Bayesian updating (extended or otherwise). 

\eqref{c1} states that the posterior, $\pi_1$, is absolutely continuous with respect to the prior, $\pi_0$. Whatever was considered impossible before obtaining evidence must still be consider impossible after.\footnote{This is because we are narrowing in on the case where the DM does not receive evidence she considered impossible. When the conditioning event is considered ex-ante impossible, there are no additional restrictions needed to characterize EB; see Example \ref{ex:anythinggoes0}.} The intuition behind \eqref{c2} is best seen when $\pi_1(F) > 0$, which by \eqref{c1} implies that $\pi_0(F)> 0$ as well. In this case, we can rewrite \eqref{c2} as
\begin{equation}
\label{eq:altc2}
\frac{\pi_0(E)}{\pi_0(F)} \leq \frac{\pi_1(E)}{\pi_1(F)}.
\end{equation}

Recall, the conditioning event, or `evidence,' is $S_1$. Call an event \emph{consistent} with the evidence if intersects $S_1$ (so that the event is not ruled out) and \emph{completely consistent} if it is a subset of $S_1$ (so that no part of the event has been ruled out).
Now, \eqref{c2} refers to an event $E \subseteq S_1$ that is completely consistent with the discovered evidence and an event $F$ that is consistent but may or may not be completely consistent---if $F \not\subseteq S_1$ then there are some aspects of (i.e., states in) the event $F$ that contradict the discovered evidence.  \eqref{c2} states that \eqref{eq:altc2} holds with equality when both $E$ and $F$ are completely consistent.  So, when the discovered evidence does not contradict any part of either event, then the \emph{relative} likelihood of the events must remain unchanged. 

This ratio preservation is often the motivation or verbal characterization of Bayesianism. However, when considering the more general environment when $\S_0 \subset \S_1$, we must specifically discuss the equally important case when $F$ is \emph{not} completely consistent with the evidence. In such cases, the ratio between the likelihood of $E$ and $F$ is not preserved, but, it must change in such a way as to make $E$---the event completely consistent with the evidence---relatively more likely. 

Combined, we see that extended Bayesianism (and by reduction, standard Bayesianism) can be summed up as the restriction that relative likelihoods can change \emph{only} when the events contain states contradictory to the evidence, and even then only so as to make completely consistent events more likely. This characterization will be interpreted in a behavioral context in the next section. 

The characterization of EB through commensurability also helps to illuminate when the (extended) Bayesian paradigm has nothing to say about how beliefs should change. 

\begin{definition}
Say that $E \in \S_1$ is \emph{completely non-measurable} if for all $F \subseteq E$, $F \neq \emptyset$, $F \notin \S_0$.
\end{definition}

\begin{cor}
\label{rmk:diff}
Fix $(\pi_0,\pi_1)$ such that $S_1$ is completely non-measurable. Then $(\pi_0,\pi_1)$ satisfies EB and $\pi_0^{*}(S_1) > 0$ if and only if \eqref{c1} holds.
\end{cor}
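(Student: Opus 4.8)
The plan is to deduce the corollary directly from Theorem \ref{prop:update}, which already gives the equivalence between ``$(\pi_0,\pi_1)$ satisfies EB and $\pi_0^{*}(S_1)>0$'' and commensurability of $\pi_1$ to $\pi_0$. Since commensurability is precisely the conjunction of \eqref{c1} and \eqref{c2}, it suffices to show that under complete non-measurability of $S_1$ the condition \eqref{c2} holds automatically, so that commensurability collapses to \eqref{c1} alone.

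First I would record the single consequence of the hypothesis that does all the work: if $S_1$ is completely non-measurable, then the only element of $\Sigma_0$ contained in $S_1$ is $\emptyset$. Indeed, any $E \in \Sigma_0$ with $E \subseteq S_1$ is a $\Sigma_0$-measurable subset of $S_1$, and the definition of completely non-measurable forbids such an $E$ from being nonempty, forcing $E = \emptyset$.

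Next I would substitute this into \eqref{c2}. That condition quantifies over pairs $E,F \in \Sigma_0$ subject to the restriction $E \subseteq S_1$; by the previous step the only admissible choice is $E = \emptyset$, for which $\pi_0(E) = \pi_1(E) = 0$. The inequality $\pi_0(E)\pi_1(F) \leq \pi_1(E)\pi_0(F)$ then reads $0 \leq 0$ and holds for every $F \in \Sigma_0$, so \eqref{c2} is trivially satisfied independent of $\pi_1$. Assembling the chain of equivalences, by Theorem \ref{prop:update} the pair $(\pi_0,\pi_1)$ satisfies EB with $\pi_0^{*}(S_1)>0$ iff \eqref{c1} and \eqref{c2} both hold, and since \eqref{c2} is automatic this is equivalent to \eqref{c1}, as claimed.

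I do not anticipate a genuine obstacle: the entire content lies in observing that the quantifier in \eqref{c2} ranges over an essentially trivial set of events (only $E = \emptyset$), leaving absolute continuity as the sole surviving restriction. The one point I would state carefully is that this is the formal shadow of the interpretive discussion preceding the corollary---a completely non-measurable $S_1$ admits no nonempty completely-consistent $\Sigma_0$-event, so there are no ratios of the form \eqref{eq:altc2} for the evidence to constrain.
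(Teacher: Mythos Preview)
Your proposal is correct and follows exactly the paper's own argument: invoke Theorem~\ref{prop:update} and observe that \eqref{c2} is vacuous when $S_1$ is completely non-measurable, since the only $E \in \Sigma_0$ with $E \subseteq S_1$ is $\emptyset$. If anything, you are slightly more careful than the paper, which glosses over the $E=\emptyset$ case rather than checking that it yields $0 \leq 0$.
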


Corollary \ref{rmk:diff} follows by noticing that \eqref{c2} is trivially satisfied when $S_1$ is completely non-measurable, as there is no event $E \in \S_0$ such that $E \subseteq S_1$. As the following example shows, this is pertinent in the case where $\W$ is a product space. 

\begin{ex}
Let $\W = \W_0 \times \W_1$ be a product space. Let $\S_0$ be generated by the sets of the form $F_0 \times \W_1$ for $F_0 \subseteq \W_0$. Take some $(\pi_0,\pi_1)$ such that $S_1$ is of the form $\W_0 \times E_1$ with $E_1 \subset \W_1$, $E_1 \notin \{\emptyset, \W_1\}$. Then $S_1$ is completely non-measurable, and hence $(\pi_0,\pi_1)$ satisfies EB as long as $\pi$ is absolutely continuous with respect to $\pi_0$.
\end{ex}

Substantiating the interpretation of completely non-measurable events as paradigm shifts, the example shows that, except in trivial cases, when the DM discovers a new dimension to the state space, EB trivializes. Intuitively, this is because the marginal of $\bar\pi$ over $\W_1$ is completely unrestricted, allowing arbitrary correlation between $S_1$ and the events in $\W_0$.

\section{A Behavioral Model} 
\label{sec:DT}

In this section we will consider an expected utility maximizing DM who is choosing over consumption acts. We will see that extended Bayesianism, or equivalently the notion of having commensurate beliefs, is captured naturally through a dynamic consistency axiom.

Towards this, let $\F_t$ denote the set of $\S_t$-measurable acts from $\W$ to $X$, with $X$ a non-trivial convex consumption space with a worst element $w$. In the standard abuse of notation, identify $X$ with the set of constant acts inside of $\F_t$. For any act $f\in \F_t$ and an event $E \in \S_t$, let $f_E$ denote the act in $\F_t$ which coincides with $f$ on $E$ and $w$ otherwise.

We will consider a DM who has a time indexed preference over such acts, given by $\s_t$. It is worth noting that the observability of such preferences is not as straightforward as in the usual model: in particular the DM's perception of the state space at any time is actually a coarsening of the true state space, and so the description of an act (by describing each state) might in-and-of-itself alter the perception of the DM. A consistent interpretation is that the DM's perception is informed by her awareness of various contingencies, and that time $t$ acts must depend only on the contingencies she is aware of at time $t$; this model is sketched in Appendix \ref{sec:DT2}.

We assume that $\s_t$ is a subjective expected utility preference, represented by $(\pi_t,u)$, where $\pi_t$ is a probability distribution in $(\W,\S_t)$ and $u: X \to \R_+$ is a non-trivial, continuous and \emph{time invariant} utility index with $u(w) = 0$. Specifically, $\s_t$ is given by the comparison of the value, for $f \in \F_t$,
\begin{equation}
\label{eq:eu}
V_t(f) = \sum_{x \in f(\W)} \pi_t(f^{-1}(x))u(x).
\end{equation}

Before introducing the behavioral counterpart to commensurability, we must first consider the notion of \emph{discarded} events: events which are ruled out in the face of the discovered evidence. As is standard, call an event $E \in \S_t$ \emph{t-null} if $f_E \sim_t f'_E$ for all $f,f' \in F_t$. Call an event $E \in \S_1$ \emph{discarded} if it is  $1$-null, but for all $F \in \S_0$, if $F \supseteq E$ then $F$ is not 0-null.  

An event $E$ is discarded if the DM considers it relevant in time 0, but irrelevant at time 1 after obtaining evidence. If $E \in \S_0$, then $E$ is discarded if and only if it is 1-null and not 0-null.

\begin{definition}
Call $(\s_0,\s_1)$ \emph{extension consistent} if for all $x,y \in X$ and $E,F \in \Sigma_0$,
$x_F \s_0 y_E$ and $y_E \succ_1 x_F$ implies $F$ contains a discarded sub-event. 
\end{definition}

\begin{thm}
\label{thm:maineq}
Let $\s_0$ and $\s_1$ be represented by the subjective expected utility functionals given by $(\pi_0,u)$ and $(\pi_1,u)$, respectively. Then the following are equivalent:
	\begin{enumerate}
	\item $(\pi_0,\pi_1)$ satisfy extended Bayesianism and $\pi_0^{*}(S_1) > 0$,
	\item $(\s_0,\s_1)$ is extension consistent, and,
	\item$\pi_1$ is commensurate to $\pi_0$. 
	\end{enumerate}
\end{thm}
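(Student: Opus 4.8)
The plan is to treat Theorem~\ref{prop:update} as a black box supplying the equivalence (1)~$\Leftrightarrow$~(3), so that it suffices to close the loop behaviorally: I would prove the cycle (1)~$\Rightarrow$~(2)~$\Rightarrow$~(3) and invoke Theorem~\ref{prop:update} for (3)~$\Rightarrow$~(1). Three preliminary observations do most of the translating work. First, because $u(w)=0$, every bet has value $V_t(x_E)=\pi_t(E)u(x)$, and an event $E\in\S_t$ is $t$-null exactly when $\pi_t(E)=0$ (if $\pi_t(E)>0$, betting a good prize on $E$ strictly beats $w$). Second, a \emph{stakes} observation: since $u$ is continuous, non-trivial and vanishes at $w$, the intermediate value theorem along the segment from $w$ to a prize $x^{*}$ with $u(x^{*})>0$ realizes every ratio $u(x)/u(y)\in(0,\infty)$; this is what converts ordinal reversals into sharp probability inequalities. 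Third, two facts about discarded events: $\emptyset$ is never discarded (the superset $\emptyset\in\S_0$ is $0$-null), and if $E\subseteq S_1$ then $E$ contains no discarded sub-event, because any $\S_1$-sub-event $G\subseteq S_1$ with $\pi_1(G)=0$ must be empty by minimality of the support $S_1$.

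For (1)~$\Rightarrow$~(2) I would argue directly from a witness $\bar\pi$, using $\pi_0^{*}(S_1)>0$ to guarantee $\bar\pi(S_1)>0$ via \eqref{eb3}. Rewriting a hypothetical reversal $x_F\s_0 y_E$, $y_E\succ_1 x_F$ in terms of $\bar\pi$ (with $\pi_0=\bar\pi$ on $\S_0$ and $\pi_1(\wc)=\bar\pi(\wc\cap S_1)/\bar\pi(S_1)$) and chaining the two inequalities through $\bar\pi(E)\ge\bar\pi(E\cap S_1)$ yields $\bar\pi(F)u(x)>\bar\pi(F\cap S_1)u(x)$. The degenerate case $u(x)=0$ is excluded, since it would force $\pi_0(E)=0$ and hence $\bar\pi(E\cap S_1)=0$, contradicting $y_E\succ_1 x_F$; so $u(x)>0$ and $\bar\pi(F\cap S_1^c)>0$. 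Then $G:=F\cap S_1^c$ is the desired discarded sub-event: it is $1$-null, and every $H\in\S_0$ with $H\supseteq G$ has $\pi_0(H)=\bar\pi(H)\ge\bar\pi(G)>0$, so $H$ is not $0$-null.

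For (2)~$\Rightarrow$~(3) I would establish \eqref{c1} and \eqref{c2} separately. For \eqref{c1}, suppose $\pi_0(F)=0$ but $\pi_1(F)>0$ and bet against the empty event: taking $x=w$ gives $w\s_0 y_F$ (both worth $0$) while $y_F\succ_1 w$ for a good prize $y$; extension consistency then forces the losing event $\emptyset$ to contain a discarded sub-event, which is impossible. For \eqref{c2}, fix $E\subseteq S_1$ and $F$ in $\S_0$; \eqref{c1} disposes of the case $\pi_0(F)=0$ (and $\pi_0(E)=0$ is trivial), so assume both positive. Since $E$ has no discarded sub-event, the contrapositive of extension consistency gives $x_E\s_0 y_F\Rightarrow x_E\s_1 y_F$ for all prizes; choosing the stakes ratio $u(x)/u(y)=\pi_0(F)/\pi_0(E)$ (legitimate by the stakes observation) turns the boundary case into exactly $\pi_1(E)\pi_0(F)\ge\pi_0(E)\pi_1(F)$, which is \eqref{c2}.

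I expect the main obstacle to be the bookkeeping around \emph{discarded} events rather than any single hard inequality. The delicate point is reading the definition of discarded through $\pi_t$-nullity and recognizing that ``$F$ contains a discarded sub-event'' is governed by $\pi_0^{*}(F\cap S_1^c)>0$—in particular that completely-consistent events ($\subseteq S_1$) and the empty event never qualify, both resting on the minimality of the support $S_1$. Getting these boundary facts exactly right, together with the $u(x)=0$ degeneracies, is where the care is needed; once they are in hand, the stakes observation reduces everything to elementary manipulation of the inequalities defining \eqref{c1} and \eqref{c2}.
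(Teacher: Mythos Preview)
Your proposal is correct and follows essentially the same route as the paper: the cycle (1)$\Rightarrow$(2)$\Rightarrow$(3) with (3)$\Rightarrow$(1) delegated to Theorem~\ref{prop:update} (the paper cites Theorem~\ref{prop:updateinf}, which is the same content in the finite case). The only cosmetic differences are that in (1)$\Rightarrow$(2) you chain through $\bar\pi(E)\ge\bar\pi(E\cap S_1)$ rather than subtracting the two displayed inequalities, and you treat the $u(x)=0$ degeneracy explicitly where the paper normalizes $u(y)=1$; in (2)$\Rightarrow$(3) your stakes-ratio argument is exactly the paper's choice $u(x)=\pi_0(F)$, $u(y)=\pi_0(E)$ rephrased.
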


\begin{proof}
We will show that (1) implies (2) implies (3) implies (1).

Assume (1) and that for some $x,y$ we have
$x_F \s_0 y_E$ and $y_E \succ_1 x_F$.  Assume without loss of generality that $u(x) = \lambda \geq 0$, $u(y) = 1$. By the subjective expected utility representation, we have
$$\lambda\pi_0(F) \geq \pi_0(E) \text{ and }  \pi_1(E) > \lambda\pi_1(F). $$
Let $\bar\pi$ be a witness to the fact that $(\pi_0,\pi_1)$ satisfy Extended Bayesianism. Then, the first inequality is equivalent to  
$$ \lambda\bar\pi(F \cap S_1) + \lambda\bar\pi(F \cap S_1^c) \geq \bar\pi(E \cap S_1) + \bar\pi(E \cap S_1^c)$$ and by \eqref{eb3}, since $\pi_0^{*}(S_1) > 0$, the second to
$$ \frac{\bar\pi(S_1 \cap E)}{\bar\pi(S_1)} >  \lambda\frac{\bar\pi(S_1 \cap F)}{\bar\pi(S_1)}.$$
Multiplying the later inequality by $\bar\pi(S_1) > 0$ and subtracting the result from the former inequality, we obtain:
$$\lambda\bar\pi(F \cap S_1^c) > \bar\pi(E \cap S_1^c) \geq 0$$ 
Clearly, $(F \cap S_1^c) \subseteq S_1^c$ is 1-null. Moreover, for any $G \in \S_0$ with $(F\cap S_1^c) \subseteq G$, we have $\pi_0(G) = \bar\pi(G) \geq \bar\pi(F \cap S_1^c) > 0$ so $G$ is not 0-null. Hence, $(F\cap S_1^c)$ is a discarded event contained if $F$. 

Now, assume (2). Let $E,F \in \S_0$ with $E \subseteq S_1$. First, assume by way of contradiction that for some $F$, $\pi_0(F) = 0$ but $\pi_1(F) > 0$. Let $x \in X$ with $x \succ_1 w$ so by the representation we have $x_\emptyset \s_0 x_F$ and $x_F \succ_1 x_\emptyset$; so by extension consistency $\emptyset$ contains a discarded (hence non-0-null) event, a clear contradiction. \eqref{c1} holds.

Now, without loss of generality let $x,y \in X$ be such that $u(x) = \pi_0(F)$ and $u(y) = \pi_0(E)$ (since $X$ is convex, it is connected, and therefore its image under the continuous $u$, $u(X)$, is a connected subset of $\R_+$. This image can be scaled via an affine transform to contain $[0,1]$).
 Then by the representation we have $x_E \sim_0 y_F$. Now since $E \subseteq S_1$, $E$ cannot contain a discarded event, and therefore, by extension consistency, it must be that $x_E \s_1 y_F$. The representation then implies that 
\begin{equation}
\label{eq:c2inproof}
\pi_1(F)\pi_0(E) \leq \pi_1(E)\pi_0(F).
\end{equation}
\eqref{c2} is satisfied, and therefore $\pi_1$ is commensurate to $\pi_0$.

 That (3) implies (1) is is a corollary to the more general Theorem \ref{prop:updateinf}, noting that in finite state spaces, \eqref{c3} is vacuous.
\end{proof}

\section{Infinite State Spaces}
\label{sec:inf}

This section extends the criterion of commensurability to capture extended Bayesianism in countably infinite state spaces. The more subtle complications which arise when considering arbitrary (uncountable) state-spaces are discussed at the end of the section.  
To see why the situation is complicated by infinite state spaces, consider the following example.

\begin{ex}
\label{ex:4}
Let $\Omega = \mathbb{N}\times\{A,B\}$ with $\Sigma_0$ generated by $\N$ and $\Sigma_1$ by the discrete partition. Set $\pi_0(E_0) = \frac12$ and $\pi_0(E_n) = 3^{-n}$ for $n > 0$. Set $\pi_1(E_{0A}) = \pi_1(E_{nB}) = 0$ and $\pi_0(E_{nA}) = 2^{-n}$ for all $n > 0$ (see Figure \ref{fig:ex}). Then $(\pi_0,\pi_1)$ satisfies \eqref{c1} and \eqref{c2} but does not satisfy EB.

To see that $(\pi_0,\pi_1)$ does not satisfy EB, consider the event $S_1 = \bigcup_n E_{nA}$. If $(\pi_0,\pi_1)$ did satisfy EB with witness $\bar\pi$, then what would $\bar\pi(S_1)$ be? From \eqref{eb3}, we have
$$ \frac1{2^n} = \pi_1(E_{n}) = \frac{\bar\pi(S_1 \cap E_{n})}{\bar\pi(S_1)} = \frac{\bar\pi(E_{nA})}{\bar\pi(S_1)} \leq \frac{\frac{1}{3^n}}{\bar\pi(S_1)}$$ 
Therefore, whatever $\bar\pi(S_1)$ is, it must be less than $\frac{2^n}{3^n}$, which, by tending to 0, implies $\bar\pi(S_1) = 0$, violating \eqref{eb3}.
\hfill $\blacksquare$
\end{ex}

\begin{figure}
\centering
\begin{tikzpicture}[scale=.6]

\newcommand*{\GetListMember}[2]{%
    \edef\dotheloop{%
    \noexpand\foreach \noexpand\a [count=\noexpand\i] in {#1} {%
        \noexpand\IfEq{\noexpand\i}{#2}{\noexpand\a\noexpand\breakforeach}{}%
    }}%
    \dotheloop
    \par%
}%

\node at (17.5,0) {$\ldots$};
\node at (17.5,2) {$\ldots$};
\node at (.5,.5) {$\Sigma_1$};
\node at (.5,2.5) {$\Sigma_0$};

\def\probsone{0,$\frac12$,$\frac14$,$\frac18$}
\def\probszero{$\frac12$,$\frac13$,$\frac19$, $\frac1{27}$}

\foreach \y in {1,2,3,4}{%
\pgfmathsetmacro\z{int(\y-1)}
\draw[draw=black,] (-3 + 4*\y,0) rectangle ++(2,1) node[pos=.5] (A\y) {\tiny $\pi_1{=}$\GetListMember{\probsone}{\y}};
\draw[draw=black, fill=gray, fill opacity=.2] (-1 + 4*\y,0) rectangle ++(2,1) node[pos=.5, opacity=1] (B\y) {\tiny $\pi_1{=}$0};
\node[above = .1cm and 0cm of A\y] (e) {$E_{\z A}$};
\node[above = .1cm and 0cm of B\y] (e) {$E_{\z B}$};
}
\draw[draw=black, fill=gray, opacity=.2] (-3 + 4,0) rectangle ++(2,1);
\foreach \y in {1,2,3,4}{%
\pgfmathsetmacro\z{int(\y-1)}
\draw[draw=black] (-3 + 4*\y,2) rectangle ++(4,1) node[pos=.5] (\y) {\tiny $\pi_0{=}$\GetListMember{\probszero}{\y}};
\node[above = .1cm and 0cm of \y] (e) {$E_{\z}$};
}
\end{tikzpicture}
\caption{A visual representation of the state space from Example \ref{ex:4}.}
\label{fig:ex}
\end{figure}

The following additional restriction rules out cases like Example 4, where the ratio of events tends to 0.

\begin{definition}
Say that $\pi_1$ is \emph{boundedly commensurate} to $\pi_0$ if it is commensurate and
\begin{enumerate}
\item[\mylabel{c3}{\textsc{c3}}] $\inf\limits_{\substack{F \in \S_0, \\ \pi_1(F)>0}} \frac{\pi_0(F)}{\pi_1(F)} > 0$.
\end{enumerate}
\end{definition}

As evidenced by Example \ref{ex:4}, if $\bar\pi$ is a witness to $(\pi_0,\pi_1)$ satisfying EB, then $\bar\pi(S_1)$ is bounded above by the ratio $\frac{\pi_0(F)}{\pi_1(F)}$ for events where this is well defined (i.e., where $F \in \S_0$ and $\pi_1(F)>0$). Thus, \eqref{c3} is a technical assurance that this ratio does not vanish, as this would preclude the existence of a witness placing positive probability on the conditioning event. 

Note that while \eqref{c3} is a technical continuity type condition, in many circumstances it can be verified easily:
\begin{rmk}
\label{rmk:inf}
If $S_1$ is not completely non-measurable then \eqref{c3} is implied by commensurability and in particular 
$$0 < \frac{\pi_0(E)}{\pi_1(E)} = \inf\limits_{\substack{F \in \S_0, \\ \pi_1(F)>0}}\frac{\pi_0(F)}{\pi_1(F)}.$$ 
To see this, note that there exists a non-empty $E \in \Sigma_0$ with $E \subseteq S_1$, and
for such $E$, $\pi_1(E) > 0$ by definition, so by \eqref{c1}, $\pi_0(E) >0$ as well. So $0 < \frac{\pi_0(E)}{\pi_1(E)}$. Moreover, for any other $F \in \Sigma_0$ with  $\pi_1(F) > 0$, \eqref{c2} provides $\pi_0(E)\pi_1(F) \leq \pi_1(E)\pi_0(F)$ which we can rearrange to obtain
$0 < \frac{\pi_0(E)}{\pi_1(E)} \leq \frac{\pi_0(F)}{\pi_1(F)}$.
\end{rmk}

The following theorem, a clear extension of Theorem \ref{prop:update}, shows that this boundedness is exactly the additional requirement to capture EB with countable state spaces.

\begin{thm}
\label{prop:updateinf}
Let $\W$ be at most countable: then $\pi_1$ is boundedly commensurate to $\pi_0$ if and only if $(\pi_0,\pi_1)$ satisfies EB and $\pi_0^{*}(S_1) > 0$.
\end{thm}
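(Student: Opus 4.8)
The plan is to establish the two implications separately. Throughout I exploit that on an at most countable $\W$ every sigma-algebra is atomic: $\Sigma_0$ and $\Sigma_1$ are each generated by their (countably many) atoms, and since $\Sigma_0\subseteq\Sigma_1$ the atoms of $\Sigma_1$ refine those of $\Sigma_0$. A measure on either algebra is then freely and uniquely determined by nonnegative atom weights summing to one; this is exactly the freedom that lets me build a witness by hand, and exactly what breaks down on uncountable spaces. I also record that, with $S_1$ the support of $\pi_1$, a $\Sigma_1$-atom $B$ satisfies $B\subseteq S_1$ if and only if $\pi_1(B)>0$.

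For the easy direction, suppose $(\pi_0,\pi_1)$ satisfies EB with witness $\bar\pi$ and $\pi_0^{*}(S_1)>0$, so that \eqref{eb3} applies and $\bar\pi(S_1)>0$. Condition \eqref{c1} is immediate, since $\pi_0(F)=0$ forces $\bar\pi(F)=0$ by \eqref{eb2} and hence $\pi_1(F)=\bar\pi(F\cap S_1)/\bar\pi(S_1)=0$. For \eqref{c2}, when $E\subseteq S_1$ I use $\bar\pi(E\cap S_1)=\bar\pi(E)=\pi_0(E)$ to get $\pi_1(E)=\pi_0(E)/\bar\pi(S_1)$, while $\pi_1(F)\le\bar\pi(F)/\bar\pi(S_1)=\pi_0(F)/\bar\pi(S_1)$; cross-multiplying yields $\pi_0(E)\pi_1(F)\le\pi_1(E)\pi_0(F)$. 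The same bound $\pi_1(F)\le\pi_0(F)/\bar\pi(S_1)$ gives \eqref{c3}, as it rearranges to $\pi_0(F)/\pi_1(F)\ge\bar\pi(S_1)>0$ whenever $\pi_1(F)>0$.

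The real work is the converse, where I must manufacture a witness from bounded commensurability. I set $\alpha=\inf\{\pi_0(F)/\pi_1(F):F\in\Sigma_0,\ \pi_1(F)>0\}$, positive by \eqref{c3}, and define $\bar\pi$ on $\Sigma_1$-atoms by assigning $\bar\pi(B)=\alpha\,\pi_1(B)$ to each atom $B\subseteq S_1$ and, for each $\Sigma_0$-atom $A$, distributing the residual $\pi_0(A)-\alpha\,\pi_1(A)$ across the $\Sigma_1$-atoms of $A$ lying in $S_1^{c}$. Extending by countable additivity gives a measure of total mass $\sum_A\pi_0(A)=1$ with $\bar\pi(A)=\pi_0(A)$ on every $\Sigma_0$-atom, hence $\bar\pi$ is a probability satisfying \eqref{eb2}. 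Because $\bar\pi$ coincides with $\alpha\pi_1$ on all subsets of $S_1$, I obtain $\bar\pi(S_1)=\alpha>0$ and $\bar\pi(E\cap S_1)=\alpha\,\pi_1(E)$ for every $E\in\Sigma_1$, which is \eqref{eb3}. The condition $\pi_0^{*}(S_1)>0$ then comes for free, since any $F\in\Sigma_0$ with $S_1\subseteq F$ has $\pi_0(F)=\bar\pi(F)\ge\bar\pi(S_1)=\alpha$.

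The step demanding the most care—and the main obstacle—is verifying that this prescription is internally consistent, i.e. that every residual is nonnegative and can genuinely be housed within $S_1^{c}$. Nonnegativity follows from the choice of $\alpha$ as an infimum, which gives $\alpha\le\pi_0(A)/\pi_1(A)$ at each atom with $\pi_1(A)>0$. For placement, the case $A\not\subseteq S_1$ is harmless because $A$ then contains a $\Sigma_1$-atom in $S_1^{c}$; the delicate case is $A\subseteq S_1$, where no such atom exists and I must show the residual is exactly zero. This is precisely where \eqref{c2} enters: by Remark \ref{rmk:size} the ratio $\pi_0(\cdot)/\pi_1(\cdot)$ is constant over all $\Sigma_0$-atoms inside $S_1$, and by Remark \ref{rmk:inf} this common value equals $\alpha$, so $\alpha\,\pi_1(A)=\pi_0(A)$ and the residual vanishes. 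Finally, \eqref{c1} is needed to handle atoms with $\pi_0(A)=0$: it forces $\pi_1(A)=0$, so the $\alpha\pi_1$-mass assigned inside such an $A$ is zero and consistency with $\bar\pi(A)=\pi_0(A)=0$ is preserved. In short, \eqref{c1} rules out spurious mass on $\pi_0$-null atoms, \eqref{c2} pins down the uniform ratio on $S_1$, and \eqref{c3}—the ingredient absent in Example \ref{ex:4}—keeps $\alpha$ bounded away from $0$ so the witness can actually charge the conditioning event.
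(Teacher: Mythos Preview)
Your proof is correct and follows essentially the same strategy as the paper's: both construct the witness by setting $\bar\pi=\beta\,\pi_1$ on $S_1$ (the paper calls the constant $\beta$, you call it $\alpha$) and then distributing the residual mass $\pi_0(A)-\beta\,\pi_1(A)$ over $S_1^c$. The only cosmetic differences are that you work directly at the level of $\Sigma_1$-atoms from the outset (exploiting countability immediately), whereas the paper phrases the construction via the trace $\sigma$-algebras $\Sigma^\bullet,\Sigma_0^\circ,\Sigma^\circ$ and invokes countability only at the final extension step; and you take $\alpha$ to be the infimum in \eqref{c3} in all cases, whereas the paper distinguishes whether $S_1$ is completely non-measurable and otherwise sets $\beta=\pi_0(E)/\pi_1(E)$ for some $E\in\Sigma_0$ with $E\subseteq S_1$ (which, by Remark~\ref{rmk:inf}, coincides with your $\alpha$ anyway).
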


\begin{proof}
The `if' direction is easy: Assume $\pi_0^{*}(S_1) > 0$ and that $(\pi_0,\pi_1)$ satisfies EB with $\bar\pi$ a measure witnessing this fact. \eqref{c1} is obvious. Towards, \eqref{c2}: Take some $E,F \in \Sigma_0$ with $E \subseteq S_1$. If $\pi_1(F) = 0$ then \eqref{c2} holds immediately, so assume $\pi_1(F) > 0$. Then, by the properties of $\bar\pi$, 
$$\frac{\pi_0(E)}{\pi_0(F)} = \frac{\bar\pi(E)}{\bar\pi(F)} 
 = \frac{\bar\pi(E \cap S_1)}{\bar\pi(F)}
  \leq \frac{\bar\pi(E\cap S_1)}{\bar\pi(F\cap S_1)} = \frac{\pi_1(E)}{\pi_1(F)}$$
establishing \eqref{c2}. 
\eqref{c3} holds because
$$\frac{\pi_0(F)}{\pi_1(F)} = \bar\pi(S_1)\frac{\pi_0(F)}{\bar\pi(F \cap S_1)} \geq  \bar\pi(S_1)\frac{\pi_0(F)}{\bar\pi(F)}=  \bar\pi(S_1)\frac{\pi_0(F)}{\pi_0(F)} = \bar\pi(S_1).$$ 
for all $F \in \Sigma_0$ with $\pi_1(F) > 0$.

Towards the `only if' direction, assume that $\pi_1$ is commensurate to $\pi_0$.
We must find a $\bar\pi$ on $(\W,\Sigma_1)$ such that the conditions of EB hold.

First, we must set a value, $\beta$, for $\bar\pi(S_1)$. If there exists an $E \in \Sigma_0$ with $E \subseteq S_1$, then set $\beta = \frac{\pi_0(E)}{\pi_1(E)}$. By Remark \ref{rmk:size}, the choice of $E$ is irrelevant and $\beta \leq 1$. Further, \eqref{c1}  indicates that $0 < \beta$ and Remark \ref{rmk:inf} that $\beta \leq \inf\limits_{\substack{F \in \S_0, \\ \pi_1(F)>0}} \frac{\pi_0(F)}{\pi_1(F)}$. If no such $E$ exists, take an arbitrary $0< \beta \leq  \inf\limits_{\substack{F \in \S_0, \\ \pi_1(F)>0}} \frac{\pi_0(F)}{\pi_1(F)} \leq 1$. If $\beta = 1$ then setting $\bar\pi = \pi_1$ suffices, so assume $\beta \in (0,1)$.

Define the $\sigma$-algebras $\S^\bullet = \{ E \in \S_1 \mid E \subseteq S_1 \}$ and $\S^\circ_0 = \{F \cap S_1^c \mid F \in \S_0 \}$, $\S^\circ = \{ E \in \S_1 \mid E \subseteq S_1^c \}$. Let $\pi^\bullet: \S^\bullet \to [0,1]$ defined by $\pi^\bullet(E) = \beta \pi_1(E)$. Moreover, let $\pi^\circ_0: \S^\circ_0 \to [0,1]$ defined by $\pi^\circ_0(F \cap S_1^c) = \pi_0(F) - \pi^\bullet(F \cap S_1)$. 

To see that this is well defined, let $F,F' \in \S_0$, with $F\neq F'$ and $F \cap S_1^c = F' \cap S_1^c$. We will consider the case where $F' \subset F$, but the other cases clearly follow from the same argument. So we have:  $\emptyset \neq F \setminus F' \subseteq S_1$ and $F \setminus F' \in \S_0$. Then
\begin{align*}
\pi_0(F) - \pi^\bullet(F \cap S_1) &= \pi_0(F') + \pi_0(F \setminus F') - \pi^\bullet(F' \cap S_1) - \pi^\bullet(F \setminus F') \\
&= \pi_0(F') + \pi_0(F \setminus F') - \pi^\bullet(F' \cap S_1) - \frac{\pi_0(F \setminus F')}{\pi_1(F \setminus F')} \pi_1(F \setminus F') \\
&= \pi_0(F') - \pi^\bullet(F' \cap S_1)
\end{align*}
That $\pi^\circ_0$ is non-negative follows from the observation that $\pi_0(F) \geq \pi_1(F)\beta$ for all $F \in \S_0$. Moreover, it is clear that $\pi^\circ_0$ is additive and vanishes on the null set, and hence is a measure. 

Now let $\pi^\circ$ be an arbitrary extension of $\pi^\circ_0$ to $\Sigma^\circ$ (which exists because $\W$ is countable. This is the step in the proof that causes issue when extending to an uncountable state-space). Finally, set $\bar\pi: E \mapsto \pi^\bullet(E \cap S_1) + \pi^\circ(E \cap S_1^c)$. To verify \eqref{eb2} notice that for $F \in \S_0$, $\bar\pi(F) = \pi^\bullet(F \cap S_1) + \pi^\circ(F \cap S_1^c) = \pi^\bullet(E \cap S_1) + \pi_0(F) - \pi^\bullet(F \cap S_1) = \pi_0(F)$. (Moreover, this verifies also that $\bar\pi$ is a probability measure). To see \eqref{eb3} notice first that $\bar\pi(S_1) = \beta$ and so
$$ \pi_1(E) = \pi_1(E \cap S_1) = \frac{\pi^\bullet(E \cap S_1)}{\beta} = \frac{\bar\pi(E \cap S_1)}{\bar\pi(S_1)}$$
for $E \in \S_1$.
%
\end{proof}

When trying to move from countable to arbitrary state-spaces two issues arise. The first, surmountable, issue is that care needs to taken as the support of a measure is no longer well defined (without, additional---e.g., topological---restrictions) so conditioning events are identified only up to sets of measure zero. This problem can be dealt with by beginning with two topologies $\tau_0$ and $\tau_1$ (where $\tau_0$ is coarser than $\tau_1$) and assuming that $\Sigma_0$ and $\Sigma_1$ are the respective Borel $\sigma$-algebras. Then, the support $S_1$ is defined per-usual: the smallest closed subset of $\W$ for which every open neighborhood of every point of $S_1$ has $\pi_1$ positive measure. 

The second is more subtle and intractable: While $\pi_1$ completely determines $\bar\pi$ over $S_1$, the extension of $\pi_0$ over $S_1^c$ is essentially unrestricted.\footnote{This is seen in the proof of Theorem \ref{prop:updateinf} when $\pi^\circ$ is taken to be an arbitrary extension of $\pi^\circ_0$.} This seeming flexibility in the choice of $\bar\pi$ can in fact cause problems: in arbitrary state-spaces, not every measure can be extended to a finer $\sigma$-algebra. For example, let $\S_0$ be the (completion of the) Borel $\sigma$-algebra on $[0,1]$ and let $\S_1$ be the powerset of $[0,1]$; let $\pi_0=\frac12 \delta_0 + \frac12 \lambda$ and $\pi_1=\delta_0$ where $\delta_0$ is the Dirac measure on $0$ and $\lambda$ is the Lebesgue measure on $[0,1]$. While is it intuitively clear that $\pi$ is derived from $\pi_0$ by updating on $S_1 = \{0\}$, assuming the axiom of choice, there does not exist  \emph{any} extension of $\pi_0$ to $\Sigma_1$, as this would contradict the existence of non-Lebesgue-measurable sets. 

In general: if for an arbitrary state spaces there exists an extension of $\pi_0$ to $\Sigma_1$ then Theorem \ref{prop:updateinf} holds and the proof here stated goes through without issue. In particular, if $\Sigma_1$ is generated by $\Sigma_0$ and \emph{finitely} many novel events then an extension exists. This covers the natural case where the DM becomes aware of, then learns, a single new distinction (i.e., can distinguish between $E$ and $E^c$, and then learns which one is true). But, the existence of an extension can fail even when $\Sigma_1$ is generated by $\Sigma_0$ and countably many new events \citep{ershov1975extension}. Although necessary and sufficient  conditions exist on the primitives to ensure the existence of extensions (see again \cite{ershov1975extension}), these conditions, as far as I can tell, have no economic interpretation and I omit discussing them here.

\section{Repeated Conditioning}
\label{sec:rep} If the DM discovers unforeseen evidence more than once, the observed subjective probabilities will form a finite sequence, $\pi_0 \ldots \pi_N$, over increasingly fine algebras, $\Sigma_0 \ldots \Sigma_N$. Let $\pi_i^{*}$ be defined over $\Sigma_{i+1}$ for $i < N$. For a DM who adheres to Bayesianism to the extent possible under unawareness, each $(\pi_n,\pi_{n+1})$ will satisfy EB. Even if the modeler cannot feasibly observe each $\pi_n$, this hypothesis can be falsified, since under this assumption, $(\pi_n,\pi_{m})$ will satisfy EB whenever $m \geq n$. 

It is also possible to find a \emph{single} $\bar\pi$ that acts as the common extension for each period's beliefs. That is, there exists a $\bar\pi$ such that $\bar\pi(\wc \mid S_n)$ is an extension of $\pi_n$ to $S_n$. In words, if we look at the restriction of $\bar\pi$ to $\Sigma_n$, then $\pi_n$ is formed by conditioning this restriction on $S_n$---the $n^{th}$ period's evidence. 

\begin{ex}
Let $\W = \{\w_1, \ldots \w_5\}$. Let $\S_0$ be generated by the partition $\{\{\w_1,\w_2,\w_3\},\{\w_4,\w_5\}\}$ and let $\pi_0$ assign the two cells $\frac12$. Let $\S_1$ be generated by the partition $ \{\{\w_1,\w_2\},\{\w_3\},\{\w_4,\w_5\}\}$ and let $\pi_1$ assign the three cells 0, $\frac13$ and $\frac23$, respectively.  Finally, let $\S_2$ be generated by the discrete partition, and set $\pi_2(\w_1) = \pi_2(\w_2) = \pi_2(\w_5) = 0$ and $\pi_2(\w_3) = \pi_2(\w_4) = \frac12$.

Notice that both $(\pi_0,\pi_1)$ and $(\pi_1,\pi_2)$ satisfy EB---so too does $(\pi_0,\pi_2)$. Moreover, consider $\bar\pi$ over $\S_2$ given by $\bar\pi(\w_1) = \bar\pi(\w_2) = \frac18$ and $\bar\pi(\w_3) = \bar\pi(\w_4) = \bar\pi(\w_5) = \frac14$. Notice that $\pi_0$ is the restriction of $\bar\pi$ to $\S_0$, and $\pi_1$ the restriction to $\S_1$ conditional on $S_1$ and $\pi_2$ the (trivial) restriction to $\S_2$ conditional of $S_2$.
\end{ex}

\begin{thm}
\label{thm:rep}
Let $(\pi_0 \ldots \pi_{N})$ be defined over increasingly fine algebras, $\Sigma_0 \ldots \Sigma_N$ (all over a common state space $\W$). If $(\pi_n,\pi_{n+1})$ satisfies EB and $\pi_{i}^{*}(S_{i+1}) > 0$, for each $n \in \{0,\ldots ,N-1\}$, then $(\pi_n,\pi_{m})$ will satisfy EB for all $0 \leq n \leq m \leq N$. Moreover, in such cases, there exists a $\bar\pi \in (\W,\S_N)$ for all $n \leq N$, $\bar\pi(\wc \mid S_n)$ is an extension of $\pi_n$ to $S_n$.
\end{thm}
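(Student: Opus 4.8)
The plan is to prove the ``moreover'' clause first---constructing a single witness $\bar\pi$ on $(\W,\Sigma_N)$---and then read off the first claim as a consequence. For each step let $\bar\pi^{(n+1)}$ be a witness to $(\pi_n,\pi_{n+1})$ satisfying EB, so that $\bar\pi^{(n+1)}$ extends $\pi_n$ by \eqref{eb2} and $\pi_{n+1}(\wc)=\bar\pi^{(n+1)}(\wc\cap S_{n+1})/\gamma_{n+1}$ on $\Sigma_{n+1}$ with $\gamma_{n+1}:=\bar\pi^{(n+1)}(S_{n+1})>0$ by \eqref{eb3} (using $\pi_n^{*}(S_{n+1})>0$). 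Two preliminary facts drive everything. First, the supports nest, $S_N\subseteq\cdots\subseteq S_0$: since $\bar\pi^{(n+1)}$ extends $\pi_n$ we have $\bar\pi^{(n+1)}(S_n)=1$, hence $\bar\pi^{(n+1)}(S_n^c)=0$, so $\pi_{n+1}(S_n^c)=0$ and minimality of the support gives $S_{n+1}\subseteq S_n$; in particular each $\bar\pi^{(n+1)}$ is supported on $S_n$. Second, for $E\in\Sigma_n$ with $E\subseteq S_{n+1}$ one has $\pi_n(E)/\pi_{n+1}(E)=\gamma_{n+1}$ (immediate from \eqref{eb3} and $\bar\pi^{(n+1)}(E)=\pi_n(E)$), the constant ratio that keeps the period normalizations coherent.

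Next I would build $\bar\pi$ by induction, producing a consistent family $\bar\pi_n$ on $\Sigma_n$ with $\bar\pi_{n+1}|_{\Sigma_n}=\bar\pi_n$ and $\bar\pi_n(\wc\mid S_n)=\pi_n$ on $\Sigma_n$. Set $\bar\pi_0=\pi_0$ (so $\beta_0:=\bar\pi_0(S_0)=1$, and conditioning on $S_0$ is trivial since $\pi_0(S_0)=1$). For the step, write $\beta_n:=\bar\pi_n(S_n)$ and note that on $S_n$ the measure $\bar\pi_n$ equals $\beta_n\pi_n$, while on $S_n^c$ it carries residual mass $1-\beta_n$. I would then define
\[
\bar\pi_{n+1}(E)\;:=\;\beta_n\,\bar\pi^{(n+1)}(E)\;+\;\rho\!\left(E\cap S_n^c\right),\qquad E\in\Sigma_{n+1},
\]
where $\rho$ is any extension to $\{A\in\Sigma_{n+1}:A\subseteq S_n^c\}$ of the countably additive set function $F\cap S_n^c\mapsto\bar\pi_n(F\cap S_n^c)$ on $\{F\cap S_n^c:F\in\Sigma_n\}$. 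Because $\bar\pi^{(n+1)}$ is supported on $S_n$, the first term supplies exactly $\beta_n\pi_n$ on $S_n$ and nothing on $S_n^c$, while $\rho$ supplies the residual; \eqref{eb2} for $F\in\Sigma_n$ is then the one-line identity $\bar\pi_{n+1}(F)=\beta_n\pi_n(F)+\bar\pi_n(F\cap S_n^c)=\bar\pi_n(F)$, and total mass is preserved. For the conditional property I would invoke $S_{n+1}\subseteq S_n$ to annihilate the $\rho$ term, since $(E\cap S_{n+1})\cap S_n^c=\emptyset$; this gives $\bar\pi_{n+1}(E\cap S_{n+1})=\beta_n\gamma_{n+1}\,\pi_{n+1}(E)$, hence $\bar\pi_{n+1}(S_{n+1})=\beta_n\gamma_{n+1}=:\beta_{n+1}>0$ and $\bar\pi_{n+1}(\wc\mid S_{n+1})=\pi_{n+1}$. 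Taking $\bar\pi:=\bar\pi_N$ and using $\bar\pi|_{\Sigma_n}=\bar\pi_n$ yields $\bar\pi(E\mid S_n)=\pi_n(E)$ for all $E\in\Sigma_n$, the ``moreover'' claim.

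For the first claim I would exhibit an explicit witness for each pair $(\pi_n,\pi_m)$ with $n\le m$ directly from $\bar\pi$: set $\nu:=\bar\pi|_{\Sigma_m}(\wc\mid S_n)$. Since $\bar\pi|_{\Sigma_n}(\wc\mid S_n)=\pi_n$, the measure $\nu$ restricts to $\pi_n$ on $\Sigma_n$, giving \eqref{eb2}; and since $S_m\subseteq S_n$, conditioning $\nu$ further on $S_m$ cancels the $\beta_n$ normalization and returns $\bar\pi|_{\Sigma_m}(\wc\mid S_m)=\pi_m$, giving \eqref{eb3}, with $\nu(S_m)=\beta_m/\beta_n>0$ (which, as $\nu$ extends $\pi_n$, also certifies $\pi_n^{*}(S_m)>0$). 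Hence $(\pi_n,\pi_m)$ satisfies EB.

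The only genuine obstacle I anticipate is the existence of the off-support extension $\rho$: exactly as with the arbitrary extension $\pi^\circ$ in the proof of Theorem \ref{prop:updateinf}, one must extend the residual measure living on $S_n^c$ from $\Sigma_n$ to $\Sigma_{n+1}$. This is automatic when $\W$ is at most countable and inherits precisely the caveats discussed after Theorem \ref{prop:updateinf} in the uncountable case; every other step is bookkeeping forced by the nesting $S_{n+1}\subseteq S_n$ and the constant ratio $\gamma_{n+1}$.
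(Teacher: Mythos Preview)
Your argument is correct and takes a genuinely different route from the paper's. The paper reduces to $N=2$ and, rather than building the global witness by hand, verifies that $\pi_2$ is boundedly commensurate to the step-one witness $\bar\pi_{01}$ (checking \eqref{c1}, \eqref{c2}, \eqref{c3} directly) and then invokes Theorem~\ref{prop:updateinf} as a black box to produce a witness $\bar\pi_{02}$ for $(\bar\pi_{01},\pi_2)$; since $\bar\pi_{02}$ extends $\bar\pi_{01}$, it also witnesses $(\pi_0,\pi_2)$ and conditions to $\pi_1$ on $S_1$. You instead construct the global $\bar\pi$ explicitly by the telescoping formula $\bar\pi_{n+1}=\beta_n\bar\pi^{(n+1)}+\rho$ and then read off every pair witness as $\bar\pi(\wc\mid S_n)$ restricted to $\Sigma_m$. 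Your approach is more transparent about exactly where the measure-extension step is needed (the residual $\rho$ on $S_n^c$, mirroring $\pi^\circ$ in the proof of Theorem~\ref{prop:updateinf}) and yields the common witness and all pairwise witnesses in one stroke; the paper's approach has the advantage of showcasing the commensurability characterization as a working tool and hides the extension inside an already-proved theorem. Both rely on the same two structural facts you isolate---$S_{n+1}\subseteq S_n$ and the constant ratio $\gamma_{n+1}$---and both inherit the same countability caveat.
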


\begin{proof}
We will show this for $N = 2$, the general case following easily by induction. Let $\bar\pi_{01}$ and $\bar\pi_{12}$ denote the two witnesses to $(\pi_0,\pi_1)$ and $(\pi_1,\pi_2)$ satisfying expected Bayesianism, respectively. We will show that $\bar\pi_{01}$ is  boundedly commensurate to $\pi_{2}$. This suffices to prove the claim by appealing to Theorem \ref{prop:update} as it allows us to find a witness, $\bar\pi_{02}$, to $(\bar\pi_{01}, \pi_2)$ satisfying EB. Indeed, since $\bar\pi_{01}$ extends $\pi_0$ so too does $\bar\pi_{02}$ which therefore serves as a witness to $(\pi_0, \pi_2)$ satisfying EB. Further, since $\bar\pi_{02}$ extends $\bar\pi_{01}$, we have that, for all $E \in \Sigma_1$ with $E \subseteq S_1$: $\pi_1(E) = \frac{\bar\pi_{01}(E)}{\bar\pi_{01}(S_1)} = \frac{\bar\pi_{02}(E)}{\bar\pi_{02}(S_1)} =  \bar\pi_{02}(E \mid S_1)$. 

So fix some $E,F \in \S_1$ such that $E \subseteq S_2$. Let $\bar\pi_{01}(F) = 0$, then $\pi_1(F) = \frac{\bar\pi_{01}(F\cap S_1)}{\bar\pi_{01}(S_1)} = 0$ so by the commensurability of $\pi_1$ and $\pi_2$, $\pi_2(E) = 0$. \eqref{c1} holds. Moreover, we have:
\begin{align*}
\bar\pi_{01}(E)\pi_2(F) &= \bar\pi_{01}(S_1)\frac{\bar\pi_{01}(E\cap S_1)}{\bar\pi_{01}(S_1)}\pi_2(F) &&\text{ (since $S_2 \subseteq S_1$)} \\
&= \bar\pi_{01}(S_1) \pi_1(E)\pi_2(F) &&\text{ (by definition of $\bar\pi_{01}$)}\\
&\leq \bar\pi_{01}(S_1) \pi_2(E)\pi_1(F) &&\text{ (by \eqref{c2} for $\pi_1$ and $\pi_2$)}\\
&= \bar\pi_{01}(S_1) \pi_2(E)\frac{\bar\pi_{01}(F\cap S_1)}{\bar\pi_{01}(S_1)}  &&\text{ (by definition of $\bar\pi_{01}$ again)}\\
&\leq \pi_2(E)\bar\pi_{01}(F),
\end{align*}
so \eqref{c2} holds. Finally, 
$$ \inf\limits_{\substack{F \in \S_1, \\ \pi_2(F)>0}} \frac{\bar\pi_{01}(F)}{\pi_2(F)} \geq \inf\limits_{\substack{F \in \S_1, \\ \pi_2(F)>0}} \frac{\bar\pi_{01}(F \cap S_1)}{\pi_2(F)} =  \bar\pi_{01}(S_1)\inf\limits_{\substack{F \in \S_1, \\ \pi_2(F)>0}} \frac{\pi_{1}(F)}{\pi_2(F)} > 0,$$
so \eqref{c3} holds. This completes the proof.
\end{proof}

Notice that if we drop the requirement that $\pi_{i}^{*}(S_{i+1}) > 0$ the result is no longer true and in fact is not even true for standard Bayesianism. For example let $\W = \{\w_1,\w_2,\w_3\}$, $\S_0 = \S_1 = \S_2$ be generated by the discrete partition, and  
\begin{align*}
\pi_0: \begin{cases}
\w_1 \mapsto 0 \\
\w_2 \mapsto \frac13 \\
\w_3 \mapsto \frac23 
\end{cases}
\quad
\pi_1: \begin{cases}
\w_1 \mapsto 1 \\
\w_2 \mapsto 0 \\
\w_3 \mapsto 0 
\end{cases}
\quad
\pi_2: \begin{cases}
\w_1 \mapsto 0 \\
\w_2 \mapsto \frac23 \\
\w_3 \mapsto \frac13
\end{cases}.
\end{align*}
Although each successive pair satisfies EB (by virtue of updating on an event of ex-ante measure 0), $(\pi_0,\pi_2)$ does not satisfy EB (or Bayesianism for that matter).

\section{A Few Notes on Related Literature} 
\label{sec:lit}

The tenet of \emph{reverse Bayesianism} (RB), as introduced by \cite{karni2013reverse}, states that when the DM becomes more aware, her relative probabilistic assessments regarding previously understood contingencies do not change. Formally, reverse Bayesianism (RB) \cite{karni2013reverse,karni2017awareness} is captured by the restriction
$\frac{\pi_1(E)}{\pi_1(F)} = \frac{\pi_0(E)}{\pi_0(F)}$
 for all $E,F \in \Sigma_0$. In the present context, by setting $F = \W$, we see this implies $\pi_1|_{\Sigma_0} = \pi_0$, so that $\pi_1$ is an extension of $\pi_0$ to the richer algebra. EB generalizes RB: the transition ($\pi_0 \to \pi_1$) satisfies reverse Bayesianism if and only if it satisfies EB and $S_1=\W$ so that the conditioning step is trivial.  
Even if becoming aware does not intrinsically change beliefs, it may well be that by the time the DM's beliefs can actually be elicited, she has taken into account some additional probabilistic information. That is to say, despite the DM adhering to RB, the beliefs elicited at time 1 reflect not only the expansion of awareness but also conventional updating. 

Of course, there are also many intuitive situations where becoming aware intrinsically does provide information. Incontrovertibly, if the DM becomes aware of an event $E$, she \emph{must} learn that she used to be unaware of $E$.\footnote{In purely semantic ``state-space" models, introspection is not captured. However, by starting with a first order language with an awareness modality and setting the states as possible worlds, one can make precise sense out of the event ``$i$ used to be unaware of the event E." See, for example, \cite{halpern2009reasoning, halpern2020dynamic} and appendix \ref{sec:DT2}.} But, even without appealing to introspection, it is reasonable that the mere existence of a concept can serve as evidence regarding contingencies the DM was already aware of. This is essentially the ``problem of old evidence'' \citep{glymour1980theory}. 

\begin{ex}
Players $i$ and $j$ are playing a card game. $i$ initially thinks it is highly likely that he and $j$ fully understand the rules of the game, and further that $j$'s behavior is not rationalizable according to these rules. Hence $i$ believes it is highly likely that $j$ is irrational. $i$ then discovers that there are in fact two variants of the game. Although $i$ does not learn any hard information about the rules of either variant, he now considers it much more likely that $j$ is best responding  (to the rules of the game $j$ believes they are playing), and therefore $i$ places less probability on the event that $j$ is irrational. 
\hfill $\blacksquare$
\end{ex}
 
 In \cite{karni2013reverse}, there are actually two distinct ways the DM can become more aware: \emph{refinement}, which is essentially what is characterized here (where $\Sigma_1$ is richer than, but defined on the same space as, $\Sigma_0$), and \emph{expansion} where the underlying state-space gets larger (so that $\Sigma_0$ is defined on $\W$ and $\Sigma_1$ on $\W\cup\W'$). Under expansions RB does not imply that $\pi_1$ is an extension of $\pi_0$, but rather that $\pi_1$ \emph{conditional on $\W$} is an extension. 

One can represent expansions via refinements by setting an event $E^\star \in \Sigma_0$ to collect ``that which is not yet understood." $E^\star$ gets carved up with each new discovery. This latter method has the added benefit of allowing the DM to reason about her own unawareness. If, however, we insist on entertaining expansions of the state-space itself so that $\Sigma_0$ is defined on $\W$ and $\Sigma_1$ on $\W\cup\W'$, then we can appropriately generalize the definition of extended Bayesianism to allow $\pi_1$ to entertain probability on newly discovered states: Setting $\pi_1 \in \D(\W\cup\W', \Sigma_1)$, say $(\pi_0,\pi_1)$ satisfies \emph{generalized extended Bayesianism} (GEB) if $\pi_1(\W) > 0$ and $(\pi_0,\pi_1(\cdot \mid \W))$ satisfy EB. In this case, we have that the overall transition ($\pi_0 \to \pi_1$) satisfies reverse Bayesianism if and only if $\W \subseteq S_1$. 

\cite{karni2018reverse} consider the case where a DM, in the process of becoming more aware, might simultaneously condition her beliefs with respect to some event, $E$. They only consider expansions of the state space and not refinements of previously describable events (i.e., $\W$ expands to $\W \cup\W'$ but $\Sigma_0 = \{E \cap \W \mid E \in \Sigma_1\}$).  They introduce  \emph{generalized reverse Bayesianism}, whereby the relative probabilities of events must remain the same only for events in $S_0 \cap S_1$ (rather than all of $S_0$ as is the case for RB). This case is clearly captured by GEB. The overall transition ($\pi_0 \to \pi_1$), where $\pi_1$ is defined on $(\Sigma_1,\W \cup \W')$, satisfies generalized reverse Bayesianism if and only if $(\pi_0,\pi_1)$ satisfy GEB and $\Sigma_0 = \{E \cap \W \mid E \in \Sigma_1\}$.

\cite{fagin1990new} introduced the notion of outer and inner conditional probability as the upper and lower envelopes of the conditional probabilities of all possible extensions to a richer algebra. In the language of this paper, the outer conditional probability of $\pi_0$ on $E \in \Sigma_1$ is
$$\pi_{0}^*(\cdot | E) = \sup\{ \bar\pi(\cdot | E) \mid \bar\pi \in \D(\W, \S_1), \bar\pi \text{ extends } \pi_0\}$$
and the inner conditional probability is defined by replacing the $\sup$ with an $\inf$. Thus it must be that $(\pi_0,\pi_1)$ satisfies EB exactly when $\pi_1$ lies inside of the outer and inner conditional probabilities of $\pi_0$ (where the conditioning event is $S_1$). As such, filtering through inner and outer probability provides another, indirect, characterization of unforeseen posteriors. 

\cite{manski1981learning} proposed a model of sequential refinement, wherein a decision maker can \emph{choose} to refine her sigma algebra at a cost. Because precise probabilities cannot be calculated on not-yet-measurable events, the agent's decision to refine her sigma algebra is governed by minimax rules.


\cite{halpern2020dynamic} consider an agent who simultaneously becomes more aware and conditions her beliefs. In complement to this paper, they consider a syntactic logical model, where knowledge and awareness are represented directly by statements in a formal language, and the notion of updating is captured by a transition rule on semantic models.

There is a large body of literature from different fields on how to update beliefs in response to \emph{unexpected} (i.e., ex-ante probability 0) rather than \emph{unforseen} (i.e., non-measurable) evidence. Some proposals: Using a sequence of probability measures with disjoint supports (lexicographic probability systems; \cite{blume1991lexicographic}); taking as the primitive a family conditional probabilities conditioned on the algebra of events (conditional probability systems;  \cite{renyi1955new}); developing a measure theory using non-standard analysis with infinitesimals \citep{robinson1973function}. For an overview of the relations between these approaches, see \cite{halpern2010lexicographic}. Recently, in the economics literature there have been models of non-Bayesian updating when the evidence is surprising (i.e., sufficiently close to ex-ante probability 0) which allows for the capture of paradigm shifts \citep{ortoleva2012modeling, galperti2019persuasion}. These approaches concern a different setting than the present paper where the DM might learn an event she had \emph{no} probabilistic assessment of. Notice that in stark contrast to a measure 0 event, an unforeseen event has also an unforeseen complement; this recasts the well discussed distinction between being unaware of an event and believing the event impossible (for example see \cite{modica1999unawareness}).

\bibliographystyle{plainnatnourl.bst}
\singlespace
\bibliography{DA.bib}

\appendix 

\section{Decision Theory under Unawareness}
\label{sec:DT2}

\def\t{\textsc{t}}
\def\f{\textsc{f}}
\def\p{\textsc{p}}
\def\q{\textsc{q}}

\newcommand{\propscript}[1]{\mathbb{#1}}

\def \P{\propscript{P}}
\def \X{\propscript{X}}
\def \A{\propscript{A}}
\def \S{\propscript{S}}

\def \t{\textsc{t}}

\def \T{\mathbf{T}}
\def \F{\mathbf{F}}

The decision theoretic treatment in Section \ref{sec:DT} might appear to lack a consistent interpretation---the acts considered by the DM are mathematical functions defined on a state space the DM might not fully understand. Worse still, if the DM is unaware of certain aspects of the decision problem, then perhaps her understanding of an event $E$ depends on how the event is described to the DM: both ``It is raining or it is not raining" and ``The axiom of choice implies that every vector space has a basis" are tautologies, so refer to the same event, but a DM unaware of set theory will in general treat these statements differentially. 

This section outlines a foundational approach to assessing a DM's probabilistic assessments under unawareness, following a syntactic approach whereby the objects of choice are language based. In this model, it is easy to separate the DM's awareness (she understands a subset of the language) from her probabilistic assessments of uncertainty (the likelihood she assigns to statements  being true, given she is aware of them). 

The uncertainty faced by the DM is captured by a set of propositions, $\P$, each of which can be either true or false. These are statements about the world ``A quantum computer can factor integers in polynomial time" or ``The marginal cost of production is constant,'' and we can think of them as verbal descriptions that must be interpreted by the agent. 

$\P$ contains two distinguished propositions $\T$ and $\F$, that are interpreted as ``true'' and ``false'' respectively. Then, $\L(\P)$ is the language defined inductively, beginning with $\P$ and such that if $\phi,\psi$ are in $\L(\P)$ then so too are $\neg \phi$ and $(\phi \land \psi)$. The interpretation is as in propositional logic: $\neg \phi$, the \emph{negation} of $\phi$, is interpreted as the statement that $\phi$ is not true and $\phi\land\psi$, the \emph{conjunction} of $\phi$ and $\psi$, is interpreted as the statement that both $\phi$ and $\psi$ are true.

At a given time the DM will not be aware of all propositions, but rather a subset $\A_t \subseteq \P$. Our assumption that awareness expands indicates that $A_0 \subseteq A_1$. Thus her understanding of the uncertainty she faces is limited by her unawareness.

If $\W$ is a state space, then call $\t: \L(\P) \to 2^\W$ a \emph{truth valuation} if it obeys the rules of logic: $\t(\T) = \W$ and $\t(\F) = \emptyset$; $\t(\neg \phi) = \t(\phi)^c$; $\t(\phi \land \psi) = \t(\phi) \cap \t(\psi)$. The interpretation is that $\t(\phi)$ is the set of states where $\phi$ is true. Unlike the model provided above, the state space and truth valuation is not given exogenously, but will be part of the representation.

The primitive here is a time-indexed preference relation over \emph{bets}. A \emph{bet} is a pair $(\phi,x) \in \L(\P) \times X$  (where $X$ is a convex consumption space with worst element $\w$, as in Section \ref{sec:DT}) written $x_\phi$. The interpretation of $x_\phi$ is a bet that pays the prize $x$ when $\phi$ is true and provides $\w$ otherwise. Notice that if $(\W,\t)$ is some commonly known state space and truth valuation, then $x_\phi$ corresponds to the state space based act $x_{\t(\phi)}$. Let $\B_t$ denote the set of bets of the form $x_\phi$ with $\phi \in \L(\A_t)$ so that $\B_0 \subseteq \B_1$.

Given a language $\L(\P)$, a subjective model of uncertainty is a quadruple: $M = (\W,\t,\l, u)$ where $\W$ is a state space, $\t$ is a truth valuation $\t: \L(\P) \to 2^\W$, $\pi$ is a probability distribution over $(\W,\Sigma)$ (with $\Sigma$ a sigma-algebra rich enough to make $\t$ measurable) and $u: X \to \R_+$ is a utility index with $u(w) = 0$.
 Say that $M$ \emph{represents} $\s_t$ if, for all pairs of bets,
$x_\phi \s_t y_\psi$ if and only if 
$$ 
u(x) \l(\t(\phi)) \geq u(y) \l(\t(\phi)).$$ 
\cite{piermont2020failures} provide conditions on the primitive $\s$ to ensure its representation by a subjective model. 

From here, we can incorporate analogs of the usual decision theoretic restrictions: call $\phi \in \A_t$ $t$-null if $x_\phi \sim_t y_\phi$ for all $x,y\in X$. Call a statement $\phi$ \emph{discarded} if it is $1$-null and for any $\psi$ such that $\phi \Rightarrow \psi$ (where $\Rightarrow$ is deduction under the rules of propositional logic) then $\psi$ is not 0-null. 

\begin{definition}
Call $(\s_0,\s_1)$ \emph{extension consistent} if for all $x,y \in X$, we have
$x_\phi \s_0 y_{\phi'}$ and $y_{\phi'} \succ_1 x_\phi$ implies there exists some $\psi$ such that $\psi \Rightarrow \phi$ and $\psi$ is discarded. 
\end{definition}

\begin{thm}
\label{thm:syntax}
Assume  $(\s_0,\s_1)$ are extension consistent and can be represented by subjective models of uncertainty representing using a common utility index. Then there exists a common state space $\W$, a truth valuation $\t$ and a utility $u$ such that 
$(\W, \t, \pi_t, u)$ represents $\s_t$,  where $\pi_t \in (\W,\t(\A_t))$ and $(\pi_0,\pi_1)$ satisfies extended Bayesianism. 
\end{thm}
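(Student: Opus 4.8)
The plan is to reduce the purely syntactic statement to the state-space characterization already proved in Theorem \ref{thm:maineq}, by manufacturing a \emph{single} common semantic model carrying both of the represented preferences. The first step is to strip each representation down to its behaviorally meaningful content. Given that $\s_t$ is represented by some model $(\W_t, \t_t, \pi_t, u)$ with the common utility $u$, I would define, for each aware statement $\phi \in \L(\A_t)$, the quantity $p_t(\phi) = \pi_t(\t_t(\phi))$. Since $\t_t$ is a truth valuation (a Boolean homomorphism) and $\pi_t$ is a measure, $p_t$ is a finitely additive probability on the Lindenbaum--Tarski algebra of $\L(\A_t)$; moreover $p_t(\phi)$ is pinned down by $\s_t$ and $u$ alone, via the certainty equivalent $c$ with $x_\phi \sim_t c$ (normalizing $u(x)=1$ gives $u(c)=p_t(\phi)$), hence is independent of which model represents $\s_t$. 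Thus the relevant data of the hypothesis is exactly the pair $(p_0,p_1)$ of probabilities-on-statements, together with the fixed $u$.

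Second, I would build the common model canonically rather than trying to glue $\W_0$ and $\W_1$ directly. Take $\W$ to be the set of consistent complete truth assignments (possible worlds) over $\P$, let $\t(\phi) = \{ \w \in \W \mid \phi \text{ holds at } \w\}$ be the canonical valuation, and set $\Sigma_t = \t(\A_t)$; because $\A_0 \subseteq \A_1$ this yields $\Sigma_0 \subseteq \Sigma_1$ automatically, which is the nesting EB requires. Since $\t(\phi) = \t(\psi)$ in this model if and only if $\phi \equiv \psi$, the prescription $\pi_t(\t(\phi)) := p_t(\phi)$ is well defined and extends to a measure on $\Sigma_t$, so that $(\W,\t,\pi_t,u)$ represents $\s_t$ with $\pi_t \in \D(\W,\t(\A_t))$ (the representation existence in \cite{piermont2020failures} is what licenses the starting models). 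This is where the genuine work lies: producing one $(\W,\t)$ that carries \emph{both} measures simultaneously while respecting measurability and the awareness ordering.

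Third, I would set up a dictionary between the syntactic and semantic vocabularies on this common model: the bet $x_\phi$ is the act $x_{\t(\phi)}$; deduction $\psi \Rightarrow \phi$ corresponds to inclusion $\t(\psi) \subseteq \t(\phi)$; a statement is $t$-null if and only if its event is $t$-null; and $\psi$ is discarded (syntactic) if and only if $\t(\psi)$ is a discarded event (semantic), using that the support $S_1$ is the $\t$-image of the support statements of $\pi_1$. Under this correspondence the assumed syntactic extension consistency of $(\s_0,\s_1)$ is precisely the state-space extension consistency of the semantic preferences induced by $(\pi_0,\pi_1,u)$. Invoking the equivalence of extension consistency with EB in Theorem \ref{thm:maineq}, I conclude that $(\pi_0,\pi_1)$ satisfies extended Bayesianism (and, as a byproduct of that equivalence, that $\pi_0^{*}(S_1) > 0$), which is the claim.

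The main obstacle I anticipate is the combination of the second and third steps: guaranteeing that a single canonical $(\W,\t)$ faithfully realizes both $p_0$ and $p_1$, and that every event the state-space notion of extension consistency appeals to---in particular the discarded sub-events and the support $S_1$---is the valuation of some statement, so that the \emph{purely syntactic} hypothesis transfers without loss to the semantic condition demanded by Theorem \ref{thm:maineq}. Verifying the representation formula and the null/deduction translations is routine once the dictionary is fixed, but establishing that the translation is tight (no semantically relevant event escapes the language) is the delicate point.
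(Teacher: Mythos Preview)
Your proposal is correct and follows essentially the same route as the paper: build a single common state space via (a concrete representation of) the Lindenbaum--Tarski algebra of $\L(\P)$, transport the two probabilities-on-statements onto nested $\sigma$-algebras there, translate the syntactic notions (deduction, nullity, discarded) to their semantic counterparts, and then invoke Theorem \ref{thm:maineq}. The paper's own proof is in fact only a two-line sketch pointing to exactly this construction and to \cite{piermont2020failures} for details, so your write-up is more explicit than the paper's---and your flagged ``delicate point'' (that every semantically relevant event, in particular each $F \in \Sigma_0$ appearing in the definition of a discarded sub-event, is $\t(\psi)$ for some $\psi \in \L(\A_0)$) is precisely what makes the canonical choice $\Sigma_t = \t(\A_t)$ the right one.
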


Theorem \ref{thm:syntax} follows more or less directly from Theorem \ref{thm:maineq} once we have found a common state space for the two stages of preference. Of course, this is always possible by taking a concrete representation of the Lindenbaum---Tarski algebra of $\L(\P)$. For details, see \cite{piermont2020failures}.

\end{document}